\title{Efficient Generation of Stable Planar Cages for Chemistry}
\author{Dominique Barth, Olivier David, Franck Quessette, Vincent
  Reinhard, Yann Strozecki, Sandrine Vial
\thanks{Authors thank the French Labex CHARMMMAT for the
financial support of this work and David Auger for fruitful discussions about the folding algorithm. }}
\institute{Universit\'e de Versailles Saint-Quentin}
\newcommand{\alphabet} {\ensuremath{\mathcal{A}}\xspace}
\newcommand{\setofmotifs} {\ensuremath{\mathcal{M}}\xspace}
\newcommand{\Vc} {\ensuremath{V_{\text{c}}}\xspace}
\newcommand{\Next} {\ensuremath{\text{next}}\xspace}
\newcommand{\mot}[1] {\textbf{#1}\xspace}
\tikzstyle{label}=[draw=black,fill=white,text=black,circle]%
\tikzstyle{labelsat}=[draw=black,fill=lightgray,text=black,circle]%
\tikzstyle{centre}=[draw=black,fill=gray,text=white,circle]
\tikzstyle{noeud}=[draw=black,fill=white,circle]    
\begin{document}
\pagestyle{plain}
\maketitle
\begin{abstract}
In this paper we describe an algorithm which generates all colored planar maps
with a good minimum sparsity from simple motifs and rules to connect them.
An implementation of this algorithm is available and is used by chemists who want 
to quickly generate all sound molecules they can obtain by mixing some basic components.
\end{abstract}

\section{Introduction}
\label{sec:intro}

Carbon dioxide, as well as methane can be absorbed by large organic
cages~\cite{HTC10}. These cages are formed by
spontaneous assembly of small organic molecules, called motifs, bearing different
reacting centres. The prediction of the overall shape of the cage that
will be obtained by mixing the starting motifs is rather difficult,
especially because a given set of reacting partners can lead to very different cages. It is hence crucial for chemists to have an
operating tool that is capable of generating the many shapes of cages
accessible from predetermined molecular motifs. 

In this paper we present the algorithms we have designed and implemented to generates molecules that are much larger and less regular that what the chemists usually design by hand.  The molecules are modelled by maps i.e. planar embeddings of planar graphs, as explained in Sec.~\ref{sec::model}. The use of maps may seem unsuitable since they do not represent spatial positions.
Though, planar maps are a good model for spherical topologies and the embedding capture the rigidity of the motifs.
We must also be able to select the most relevant molecules among the huge number we generate.
In Sec.~\ref{sec::indices}, we characterize what a ``good'' molecule is through graph parameters
which are then used to filter the best molecules.
The relevance of our modeling and of our parameters is validated by the results we obtain: All small molecules (5-10 motifs) we generate and consider to be good according to our parameters have been studied before by chemists.
Some of the very regular molecules of medium size (10-20 motifs) we generate correspond to the largest cages chemists have ever produced. We also have produced cages of shape unknown to chemists that they now try to synthesize (see Sec.~\ref{sec::validation}).

The aim of this paper is the \textbf{generation of all colored planar maps up to isomorphism} representing possible
 molecules obtained from a set of elementary starting motifs (colors).
	As with all \emph{enumeration problems}, one difficulty is to avoid to produce a solution several times.
	Moreover the number of solutions may grow exponentially with their size, it is here the case for all bases of motifs but the most contrived.
	The complexity of such enumeration problems must then take into account the number of produced solutions (see~\cite{phd_strozecki} for more details on enumeration).
	
	We say that an algorithm is in \emph{polynomial total time} if its complexity is polynomial in the number of solutions and polynomial in the size of the produced solutions. In our context, where the number of solutions is always exponential in their size, we are interested in \emph{linear total time} algorithms.
	The best algorithms are in constant amortized time (CAT): the algorithm uses on average a constant time to generate each solution. 
	This kind of efficient algorithms exists for simple enumeration problems such as listing all trees~\cite{li1999advantages}.
	We may also want to bound the delay that is the time between the production of two consecutive solutions.
	Good algorithms have a delay polynomial, linear or even constant in the size of the generated solutions.

	There exist numerous works on enumeration and generation of planar maps~\cite{Lis85}, but none of them deals with the generation of planar maps built with a set of starting motifs and color constraints. Moreover, most of the literature deals with non-constructive tools~\cite{CV81} or yields algorithms which are not in polynomial total time.
	There are a few programs such as plantri~\cite{brinkmann2007fast} and CaGe~\cite{brinkmann2010cage} which generate efficiently some particular class of planar graphs such as cubic graphs or graphs with bounded size of face but they are not general enough for our purposes.
	
	The algorithm we present in Sec.~\ref{sec::algo} is far from being in polynomial total time since  we are not able to bound the number of isomorphic copies of each solution we generate.
	However, we will present several subroutines used in our algorithm which are either CAT, for instance the generation of paths and almost 
	foldable paths in Sec.~\ref{sec::backgen}, or in linear delay such as the folding of unsaturated maps of motifs in Sec.~\ref{sec::fold}.
	Moreover, we study several heuristics and improvements which makes the enumeration feasible for maps of medium size.
	Sec.~\ref{sec::results} presents numerical results which supports this assertion and illustrates the relative interest of our heuristics.

\section{Modeling of the problem\label{sec::model}}

In this section, we propose the modeling of our problem by maps.
A map is a connected planar graph drawn on the sphere considered up to continuous deformation. 
Note that by Steinitz's theorem, when a planar graph is 3-connected, there is only one corresponding map,
but otherwise there may be several of them. It is relevant to distinguish between two maps with the same underlying graph, since the geometrical informations 
contained in the maps are useful to the chemist who are interested in their 3D representation. 
All maps used in this paper are \emph{vertex-colored maps}.
The representation of a map is a graph and a cyclic order of the neighbors around each vertex.

We first model the basic chemical elements with maps we call \emph{motifs}. 
Then the motifs are assembled to form a \emph{map of motifs} and from this map we derive
a \emph{molecular map} that is a more faithful model of the molecular cages we try to design.

We use a finite even set of colors $\alphabet=\{a,\overline{a},b,\overline{b},c,\overline{c},\dots \}$ where 
each positive color $a$ in $\alphabet$ has a unique complementary negative color denoted
by $\overline{a}$ and $a$ is the complementary color of $\overline{a}$. Each color represents a different kind of reacting center. Let us give the definition of motifs.

\begin{definition}
A map $G=(\Vc \sqcup V,E,\Next)$ is a \emph{motif} if, 
(1)~$\Vc$ contains only one vertex $c$ called the center,
(2)~each vertex in $V$ is colored with a color in $\alphabet$,
(3)~$E = \{(c,u),~u \in V\}$, 
and (4)~$\Next$ gives an order on the edges of $c$:
$\Next((c,u)) = (c,v)$ means that the edge $(c,v)$ is "following" the edge $(c,u)$ in a clockwise drawing of $G$.
For all $k<|V|$, $\Next^k((c,u)) \neq (c,u)$ and $\Next^{|V|}((c,u))=(c,u)$.
\end{definition}
Note that a motif is a star graph. We assume as input $\setofmotifs$ a finite set of motifs all different. Each motif is
identified by a distinct color from an alphabet $\alphabet_M$ disjoint from $\alphabet$ induced by the colors existing in $\setofmotifs$.
Fig.~\ref{fig::motif} gives examples of motifs.

\vspace*{-0.5cm}
\begin{figure}
\centering  
  \begin{tikzpicture}[scale=0.74]
    \node[centre]   (C1) at (-1, 0) {$\mot{Y}$};
    \node[label] (A1) at ( 0,1) {$\mathbf{\overline{a}}$};
    \node[label] (A2) at ( -1, -1) {$\mathbf{\overline{a}}$};
    \node[label] (A3) at ( -2, 1) {$\mathbf{\overline{a}}$};
    \draw (C1) -- (A1);
    \draw (C1) -- (A2);
    \draw (C1) -- (A3);
    \draw[->] (-0.3,0.5) to[bend left=45] (-0.7,-0.5);
     \node at (0.2,0) {\Next};
    \node[centre]   (C2) at (2,0) {$\mot{I}$};
    \node[label] (B1) at (2,1) {$\mathbf{a}$};
    \node[label] (B2) at (2,-1) {$\mathbf{a}$};
    \draw (C2) -- (B1);
    \draw (C2) -- (B2);
    \node[centre]   (C3) at (5,0) {$\mot{X}$};
    \node[label] (D1) at (4,1) {$\mathbf{a}$};
    \node[label] (D2) at (6,1) {$\mathbf{a}$};
    \node[label] (D3) at (4,-1) {$\mathbf{a}$};
    \node[label] (D4) at (6,-1) {$\mathbf{a}$};
    \draw (C3) -- (D1);
    \draw (C3) -- (D2);
    \draw (C3) -- (D3);
    \draw (C3) -- (D4);
   \node[centre]   (C4) at (9, 0) {$\mot{V}$};
    \node[label] (E1) at ( 8,1) {$\mathbf{b}$};
    \node[label] (E2) at ( 10, 1) {$\mathbf{b}$};
    \node[label] (E3) at ( 9, -1) {$\mathbf{a}$};
    \draw (C4) -- (E1);
    \draw (C4) -- (E2);
    \draw (C4) -- (E3);

    \node[centre]   (C5) at (12,0) {$\mot{J}$};
    \node[label] (F1) at (12,1) {$\mathbf{a}$};
    \node[label] (F2) at (12,-1) {$\mathbf{\overline{b}}$};
    \draw (C5) -- (F1);
    \draw (C5) -- (F2);
  \end{tikzpicture}
\caption{Example of motifs on 
  $\alphabet_M=\{\mot{Y},\mot{I},\mot{X},\mot{V},\mot{J}\}$ and
  $\alphabet=\{a,\overline{a},b,\overline{b}\}$.}
\label{fig::motif}
\end{figure}

\vspace*{-0.9cm}
\begin{definition}
A connected planar map $G=(V_c \sqcup V,E,\Next)$ is a \emph{map of motifs} based on $\setofmotifs$ if,
(1)~the closed neighborhood of each vertex in $\Vc$ is a motif, 
(2)~each vertex in $V$ is connected to exactly one vertex in $V_c$ and at most one vertex in $V$. If $u$ and $v$ in $V$
are connected, the colors of $u$ and $v$ must be complementary.
The number of vertices in $\Vc$ is called the \emph{size} of $G$.
\end{definition}

Note that each motif of $\setofmotifs$ may appear any number of times in a map of motifs, it
may also be not present. A motif is a map of motifs of size 1.
In a map of motifs, a vertex of degree $1$ in $V$ is called a \emph{free vertex}.
A map of motifs with no free vertex is called \emph{saturated} otherwise it is called unsaturated.

In our implementation, we have an ordering of the edges around each element of $\Vc$ consistent with $\Next$ has been fixed.
For optimal performances, we use in our implementation a rotation map to represent a map of motif.
For each vertex $c_1\in \Vc$, it maps the $i^{th}$ edge of $c_1$, which connects $c_1$ to $u$, to a triplet
$(a,c_2,j)$ where $a$ is the color of $u$, $(c_1,u,v,c_2)$ is a path
with $c_2 \in \Vc$ and the edge $(c_2,v)$ is the $j^{th}$ of
$c_2$. The color of $v$ is necessarly $\overline{a}$
and is thus not represented. 
When $u$ is not connected to another vertex, $c_2$ and $j$ are
set to a default value.

\begin{figure}[h]
  \begin{tikzpicture}[scale=0.74]
    \node[centre]   (C1) at (-1, 0) {$\mot{Y}$};
    \node[label] (A1) at ( 0,-1) {$\mathbf{\overline{a}}$};
    \node[label] (A2) at ( 0, 0) {$\mathbf{\overline{a}}$};
    \node[labelsat] (A3) at ( 0, 1) {$\mathbf{\overline{a}}$};
    \draw (C1) -- (A1);
    \draw (C1) -- (A2);
    \draw (C1) -- (A3);

    \node[centre]   (C2) at (2,1) {$\mot{I}$};
    \node[labelsat] (B1) at (1,1) {$\mathbf{a}$};
    \node[labelsat] (B2) at (3,1) {$\mathbf{a}$};
    \draw (C2) -- (B1);
    \draw (C2) -- (B2);



    \node[centre] (C5) at (5, 0) {$\mot{Y}$};
    \node[label] (F1) at (4,-1) {$\mathbf{\overline{a}}$};
    \node[label] (F2) at (4, 0) {$\mathbf{\overline{a}}$};
    \node[labelsat] (F3) at (4, 1) {$\mathbf{\overline{a}}$};
    \draw (C5) -- (F1);
    \draw (C5) -- (F2);
    \draw (C5) -- (F3);

        \draw (A3) -- (B1);
        \draw (F3) -- (B2);
  \end{tikzpicture}\hfill
  \begin{tikzpicture}[scale=0.74]
    \node[centre]   (C1) at (-1, 0) {$\mot{Y}$};
    \node[labelsat] (A1) at ( 0,-1) {$\mathbf{\overline{a}}$};
    \node[labelsat] (A2) at ( 0, 0) {$\mathbf{\overline{a}}$};
    \node[labelsat] (A3) at ( 0, 1) {$\mathbf{\overline{a}}$};
    \draw (C1) -- (A1);
    \draw (C1) -- (A2);
    \draw (C1) -- (A3);

    \node[centre]   (C2) at (2,1) {$\mot{I}$};
    \node[labelsat] (B1) at (1,1) {$\mathbf{a}$};
    \node[labelsat] (B2) at (3,1) {$\mathbf{a}$};
    \draw (C2) -- (B1);
    \draw (C2) -- (B2);

    \node[centre]   (C3) at (2,0) {$\mot{I}$};
    \node[labelsat] (D1) at (1,0) {$\mathbf{a}$};
    \node[labelsat] (D2) at (3,0) {$\mathbf{a}$};
    \draw (C3) -- (D1);
    \draw (C3) -- (D2);

    \node[centre]   (C4) at (2,-1) {$\mot{I}$};
    \node[labelsat] (E1) at (1,-1) {$\mathbf{a}$};
    \node[labelsat] (E2) at (3,-1) {$\mathbf{a}$};
    \draw (C4) -- (E1);
    \draw (C4) -- (E2);

    \node[centre]   (C5) at (5, 0) {$\mot{Y}$};
    \node[labelsat] (F1) at (4,-1) {$\mathbf{\overline{a}}$};
    \node[labelsat] (F2) at (4, 0) {$\mathbf{\overline{a}}$};
    \node[labelsat] (F3) at (4, 1) {$\mathbf{\overline{a}}$};
    \draw (C5) -- (F1);
    \draw (C5) -- (F2);
    \draw (C5) -- (F3);

        \draw (A3) -- (B1);
        \draw (A2) -- (D1);
        \draw (A1) -- (E1);
        \draw (F3) -- (B2);
        \draw (F2) -- (D2);
        \draw (F1) -- (E2);
  \end{tikzpicture}
\caption{Example of two maps of motifs based on $\setofmotifs
  = \{\mot{Y},\mot{I}\}$, the first map is unsaturated while the second
  map is saturated.}
\label{fig::mapofmotifs}
\end{figure}

Based on a saturated map of motifs we construct the molecular map that is the graph model of the cages.
\begin{definition}
Let $G=(\Vc \sqcup V,E_G,\Next_G)$ be a saturated map of motifs based on $\setofmotifs$, we define the \emph{molecular map}
$M$ as the map $G$ where all paths of size three between vertices of $V_c$ are replaced by an edge.
\end{definition}
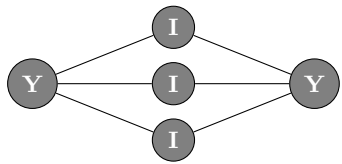
\begin{figure}[h]
 \centering
 \begin{tikzpicture}[scale=0.75]
   \node[centre]   (C1) at (0, 0) {$\mot{Y}$};
   \node[centre]   (C2) at (2.5,1) {$\mot{I}$};
   \node[centre]   (C3) at (2.5,0) {$\mot{I}$};
   \node[centre]   (C4) at (2.5,-1) {$\mot{I}$};
    \node[centre]   (C5) at (5, 0) {$\mot{Y}$};
    \draw (C1) -- (C2);
    \draw (C1) -- (C3);
    \draw (C1) -- (C4);
    \draw (C5) -- (C2);
    \draw (C5) -- (C4);
    \draw (C5) -- (C3);
  \end{tikzpicture}

 \caption{The molecular map corresponding to the saturated map of
   motifs in Fig. \ref{fig::mapofmotifs}}
 \label{fig:molmap}
\end{figure}

\section{Description of the algorithm\label{sec::algo}}

	  The aim of this paper is to solve the following problem: 
	  given 
	  a base of motifs $\setofmotifs$ and an
	  integer $n$, enumerate all molecular maps of size $n$ based on $\setofmotifs$. The complexity depends only on $n$ since 	
	  the size of $\setofmotifs$ and the size of its elements are assumed to be small constants (usually less than 4).
	  In this section, we describe an algorithm which solves this problem and explain in details its two main steps.
	  	  
	  The first one, \emph{the concatenation}, consists in adding
	  edges between complementary vertices of two maps of motifs in such a way the result is still a map of motifs. 
	  In this paper, we always concatenate a single motif to a map of motifs, see~\cite{barth2013map} for other concatenations. 
	  Sec.~\ref{sec::backgen} presents the different strategies of concatenation.	 
	  The second, \emph{the fold or folding}, consists in adding an edge between two complementary vertices of a map of motifs,
	  in such a way the result is a map of motifs. Sec.~\ref{sec::fold} presents an efficient approach to folding that we use to saturate the maps obtained by concatenation. 
	  Then, Sec.~\ref{sec::isomorphism} explain how we detect and discard isomorphic copies of the same graph. Finally in Sec.~\ref{sec::indices}, we introduce the indices which characterize a good molecular map and explain how we compute them.

	  \subsection{Backbone generation\label{sec::backgen}}

		The first step is to generate all \emph{backbones}, that is unsaturated maps of motifs of a given size $n$ which are of a very simple shape.
		The aim is that, by folding these backbones in a second step, we will recover all saturated maps of motifs.
		Since every map of motifs have a spanning tree, we can choose trees as backbones and be sure to recover all saturated maps.
		But for performance reason, we will also use paths and cycles as backbones. This turns out to be good \emph{heuristics}, speeding up considerably our algorithm while only mildly reducing the set of generated maps of motifs.
		We would also like to restrict the backbones to those which can be folded into some saturated map. 
		We address this problem by enumerating only what we call \emph{the almost foldable backbones}, with a complexity as good as for the generation of regular backbones.
		This new algorithm greatly improve the computation time.
		
		\paragraph{Spanning tree.}
		
		In a first version of our algorithm~\cite{barth2013map},
		the set of non isomorphic trees of size $n$ was explicitly stored.
		To produce the set of trees of size $n+1$, a single motif of every possible color
		was concatenated to each free vertex of each tree of size $n$. This generates all trees of size $n+1$, but 
		the drawback is that some trees are generated several times. The algorithm was thus not in linear total time 
		and we needed to do an isomorphism test on every generated tree. 
		We now generate all trees where the root and its first edge are fixed with a simple CAT algorithm. 
		This method generates a tree as many times as edges in the tree: one for each choice of a vertex as root and for each choice of first edge of this root. Therefore, the implemented algorithm do not need to store the trees which are produced on the fly,
		and has a linear delay.
		A way to further improve this would be to use ideas from CAT algorithms which generate unrooted trees~\cite{li1999advantages}. The main idea is to choose as root the centroid of the tree. However we have to deal with a second and harder problem: we generate maps of motifs and their vertices are colored.	
		We can generate all maps of motifs sharing the same underlying tree efficiently but they may turn out to be isomorphic.

		\paragraph{Hamiltonian paths.}

		Since generating trees is not easy, we propose to use simpler objects as backbones, here maps of motifs such that
		all vertices of $\Vc$ are on a path. These maps are caterpillar trees, but since the elements of $\Vc$ on the central path entirely determine the elements at distance one, we will consider them as paths and call them so.
		There are two advantages to generating paths instead of trees: they are easier to generate and their number is smaller. 
		The drawback is that not any planar graph has an Hamiltonian path, therefore we could miss some 
		planar maps in our enumeration. However, most small planar graphs have an Hamiltonian path, 
		for instance all planar cubic 3-connected graphs of size less than 38~\cite{holton1988smallest} and, if Barnette's conjecture holds, all fullerene graphs.
		
		The regularity of the graphs (all vertices of the same degree) crucially matters in the existence of an Hamiltonian path. 
		Consider for instance the base of motifs $\setofmotifs = \{\mot{I},\mot{Y}\}$ from Fig.~\ref{fig::motif}.
		All molecular maps based on $\setofmotifs$ are bipartite graphs: the \mot{I}'s in one set of the bipartition and the \mot{Y}'s 
		in the other. But in saturated maps of motifs, we have twice the number of \mot{Y} equal three times the number of \mot{I} because all vertices in $V$ must be connected, therefore there are no Hamiltonian path except in graphs with exactly three \mot{I} and two \mot{Y}. 
		This problem can be easily solved by building from $\setofmotifs$ a new base of motifs which in the end generates the same molecular maps (see Sec.~\ref{sec::metamotifs}).
		
		Let us now explain how we generate all paths based on a set of motifs $\setofmotifs$.
		We first build for each letter $a \in \alphabet$ a list $L_a$ of all non isomorphic motifs whose first 
		edge is incident to a vertex of label $\bar{a}$.
		This data structure allows us to have a complexity independent of the size of $\setofmotifs$ and of $\alphabet$. 
		Then to build all possible paths of size $n+1$ from a path of size $n$, we consider its last vertex $c\in \Vc$ and for each of the free vertex $v$ connected to $c$ and of color $a$, we attach every motif of $L_a$. Remark that beginning by the empty path, we generate all possible paths of a given size by applying recursively the algorithm.
		If we consider the paths as rooted at the first vertex produced during the algorithm, every path generated is clearly different.  
		However, we can also consider the last concatenated vertex as the beginning of the path, which means  
		we generate every path but the palindromes twice. To avoid that, we put an ordering on $\alphabet_M$, the colors of the center vertices, and we consider the sequence of colors in a path. If the sequence of colors from the beginning to the end is lexicographically larger than the sequence from the end to beginning we output the path otherwise we do not. This is implemented in our algorithm and adds only in average a constant time. 
		
		\begin{proposition}
		 The previous algorithm produces all maps of motifs which are paths without redundancies in constant amortized time, when in the base of motifs no two motifs of degree $2$ can be concatenated.
		\end{proposition}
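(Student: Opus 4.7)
The plan is to establish three things in order: the algorithm produces every path of size $n$, it produces each one at most once after the lexicographic filter, and it does so in constant amortized time under the stated hypothesis.

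For completeness, I would argue by induction on $n$. Every path of size $n$ can be viewed as a path of size $n-1$ together with one motif attached at an endpoint. By induction, the size-$(n-1)$ path appears at some point during the recursion, and since the recursive step enumerates, for every free vertex $v$ of color $a$ of the last center, every motif of $L_a$, the size-$n$ path is among the children in the generation tree.

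For uniqueness, I would view a produced path $P$ as an abstract unrooted object. The algorithm naturally produces $P$ twice, once starting from each of its two endpoints, unless $P$ is a palindrome in its center-color sequence (in which case only once). Reading the sequence of $\alphabet_M$-colors of the centers from each end gives exactly the two sequences compared lexicographically; outputting only when the forward reading dominates selects exactly one of the two generations. Since the color sequence is maintained incrementally while extending, the filter costs $O(1)$ per produced path.

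For the amortized time bound, I would analyze the generation tree $T$: its leaves are the paths of size $n$ (whose count I denote $N$), and the work done at each node is $O(1)$ because appending a motif and updating the free-vertex list is a constant number of pointer operations in the rotation-map representation. It then suffices to show $|T| = O(N)$. Classify internal nodes of $T$ as \emph{branching} (at least two children) or \emph{chain} (exactly one child); branching internal nodes are at most $N-1$ by the standard argument, so the goal reduces to bounding chain nodes. This is where the hypothesis enters: a chain node arises when the last center $c_k$ admits a single outgoing extension, which (assuming extensions are available for every free vertex) forces $c_k$ to have degree $2$, since otherwise at least two free vertices would each contribute at least one extension. Its unique child appends a motif $c_{k+1}$ to that free vertex; by the hypothesis on $\setofmotifs$, $c_{k+1}$ cannot itself have degree $2$, so it leaves at least two free vertices after the concatenation and the child is a branching node. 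Hence chain nodes never occur consecutively, each chain node injectively charges to its branching child, and $|T| \le 2N$.

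The main obstacle, and the precise place the hypothesis is needed, is ruling out long runs of single-child extensions: without the restriction one can easily cook up a base where arbitrarily many degree-$2$ motifs concatenate into a single thread, producing $\Omega(n)$-long chains in the generation tree and degrading the complexity from CAT to merely linear delay. A minor subtlety I would explicitly address is the treatment of dead-ends, where a free vertex has an empty $L_a$; after a one-time preprocessing of $\setofmotifs$ that discards vertex colors without any complementary partner, every intermediate path extends to size $n$ and the charging argument above goes through verbatim.
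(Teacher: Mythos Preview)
Your argument is correct and follows the paper's approach: both bound the recursion tree by observing that the hypothesis prevents two consecutive single-child nodes (a chain node must have a degree-$2$ last motif, and its child therefore cannot), so the tree has $O(N)$ nodes for $N$ leaves and the constant work per node yields CAT; you additionally spell out completeness, uniqueness, and the dead-end preprocessing that the paper's three-line proof leaves implicit. One small caveat worth tightening: the claim that the lexicographic palindrome filter costs $O(1)$ per output is not obviously justified---comparing a word to its reversal can take $\Theta(n)$ on near-palindromes, and incremental maintenance of the first mismatch is nontrivial since extending the path appends to the forward word while prepending to the reverse---though the paper itself makes the same unproven ``adds only in average a constant time'' assertion.
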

		\begin{proof}
		 The tree of recursive calls of our algorithm can always be seen as of degree at least $3$ by merging nodes of degree $2$ to nodes of degree larger.
		 Therefore it has at least as many internal nodes as leaves which correspond to output solutions. Since the algorithm needs only a constant time to go from one node to another, the generation of all paths can be done in constant amortized time. 
		 \qed
		\end{proof}
		  
		 In our practical examples, there are never motifs of degree two which can be concatenated. Without this condition, the algorithm has still a linear delay. 
			
		\paragraph{Hamiltonian cycles.}
		
		If we want to further restrict the backbones we generate, a simple idea is to consider cycles instead of
		paths. Again it is a good choice if all motifs have the same degree or can be made so, since for instance all planar cubic 3-connected graphs of size less than 23 have an Hamiltonian cycle~\cite{aldred1999cycles}.
		Moreover, we will only generate $2$-connected graphs and not the ones which are only $1$-connected.
		It is a desirable side effect, since those graphs have a bridge they are always the worse for the two main indices
		we are interested with, i.e. the minimum sparsity and the size of the largest cycle (see Sec.~\ref{sec::indices}).
		
In our implementation, we obtain the cycles by generating every path and by connecting their beginning to their end when possible. The same cycle can be obtained from several different paths (at most as much as its number of vertices). Therefore our algorithm is in linear amortized time. The question is, can we generate all cycles with a CAT algorithm? It seems hard because we cannot fix a natural first vertex in a cycle as in a tree, since all its vertices can be isomorphic.

		\paragraph{Almost foldable paths.}  

		In each backbone we build, all free vertices will eventually be folded to get a saturated map of motifs.
		A simple necessary condition on the colors of a saturated map of motifs	is that for each color $a \in \alphabet$, there are as many vertices in $V$ labeled by $a$ and $\bar{a}$. 
		A backbone which satisfies this condition is said to be \emph{almost foldable}.
		Let $G$ be a map of motifs and let $a_1,\dots,a_k$ be the positive colors of the alphabet $\alphabet$.   
		We denote by $C_G$ the characteristic vector of $G$, it is of size $k$ and its $i^{th}$ component is the number 
		of elements in $V$ labeled by $a_i$ minus the number of elements labeled by $\bar{a}_i$.
		Note that a map $G$ is almost foldable if and only if $C_G$ is the zero vector.

		We propose here a method to generate in constant amortized time only the almost foldable paths.
		We introduce a function $F : \mathbb{N}\times \alphabet \times \mathbb{Z}^k \rightarrow  2 ^ \alphabet$ 
		which has the following semantic: $a' \in F(n,a,(c_1,\dots,c_k))$ if and only if (1)~there is a path $P$ of size $n$
		with a free vertex in the first motif labeled by $a$, (2)~$C_P = (c_1,\dots,c_k)$, (3)~a vertex of the last motif is labeled by $a'$.

		\begin{proposition}
		 There is an algorithm which enumerates all almost foldable paths in constant amortized time plus a precomputation in $O(n^{k+1})$, when in the base of motifs no two motifs of degree $2$ can be concatenated.
		\end{proposition}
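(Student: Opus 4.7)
The plan is to augment the CAT path generator of the previous proposition with a precomputed table $F$ that prunes, in constant time, any partial path which cannot be completed into an almost foldable path of total size $n$. Once every explored node is guaranteed to yield at least one output, the amortisation argument of the previous proposition carries over almost verbatim.

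First I would fill $F(m, a, c)$ for all $m \le n$, all $a \in \alphabet$, and all $c \in \{-n, \ldots, n\}^k$ by dynamic programming on $m$. The base case $m = 1$ is read directly from $\setofmotifs$: for each motif $M$ with a free vertex of colour $a$ and $C_M = c$, every other free-vertex colour of $M$ is put into $F(1, a, c)$. The recurrence for $F(m+1, a, c)$ enumerates, over each motif $M$ having two distinct free vertices of colours $a$ and $b$, the set $F(m, \bar{b}, c - C_M)$ and takes their union. Each entry costs $O(1)$ since $|\setofmotifs|$ and the motif sizes are constants, and the table has $O(n \cdot (2n+1)^k) = O(n^{k+1})$ cells. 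Storing each entry as a bitmask over $\alphabet$ makes non-emptiness queries take $O(1)$.

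Next I would modify the backtracking of the previous proposition so that it only recurses on extensions that pass an $F$-test. At a partial path of size $m$ with end colour $a$ and characteristic vector $c$, an extension by a motif $M$ glued via a vertex of colour $\bar a$ and with new end colour $b$ is admitted iff either $m+1 = n$ and $c + C_M = 0$, or $F(n - m - 1, \bar b, -c - C_M) \neq \emptyset$. The palindrome filter is reused unchanged, so that each path is still output exactly once.

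Finally, the CAT analysis mirrors the one of the previous proposition. Because of the pruning, the recursion tree has no dead-end branches, so every leaf is an output. A recursion node has degree $1$ only when the motif just appended has degree $2$ (the new end is then forced); the hypothesis that no two degree-$2$ motifs are concatenable guarantees that such a degree-$1$ node is immediately followed by the insertion of a motif of degree $\ge 3$, which offers at least two choices of new end. Degree-$1$ nodes therefore never chain and can be merged into their unique child, yielding a tree of minimum internal degree $\ge 2$ in which the number of internal nodes is bounded by the number of leaves. With constant work per node, the enumeration runs in constant amortised time per output. The main obstacle I foresee is the careful bookkeeping of the characteristic vector when the DP concatenates two motifs — remembering that both endpoints of a glueing edge remain in $V$ and keep contributing to $C_G$ — and verifying that the pruning test is both necessary (so no output is missed) and sufficient (so no dead end is ever explored).
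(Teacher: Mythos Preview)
Your approach is essentially the paper's: precompute the table $F$ by dynamic programming in $O(n^{k+1})$, use an $F$-nonemptiness check as a constant-time pruning test inside the path generator of the previous proposition, and inherit the CAT bound from there. The only cosmetic differences are that you fill $F$ by prepending a motif whereas the paper appends one, and that you spell out the branching argument more explicitly than the paper's terse ``this only adds a single test at each step \dots\ therefore it is in constant amortized time''.
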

		
		\begin{proof}
		First, we explain how to generate all needed values of the function $F$ in time $O(n^{k+1})$ by dynamic programming. 
		Denote by $f$ the maximal number of vertices in a motif labeled by the same color. 
		For a path $P$ of size $n$, it is clear that the coefficients in $C_P$ are all in the interval $[-nf, nf]$. 
		Therefore, to generate paths of size $n$, since $f$ and the size of $\alphabet$ are constants, we need to store $O(n^{k+1})$ values of $F$ only. 
		
		$F$ is easy to compute for $n=1$: we consider each motif $M \in \setofmotifs$ 
		and each $v$ of label $a$ in $M$, and let $F(1,a,C_M)$  be the set of labels of all vertices of $M$ but $v$.
		Assume we have generated the values of $F$ for $n$, we generate the values for $n+1$ in the following way. 
		For each $a$, $C$ and each  $a' \in F(n,a,C)$, we consider all motifs  $M \in \setofmotifs$ such that one of their vertex is labeled by $\bar{a}$. 
		We add all the labels of the other vertices to the set $F(n,a,C+C_M)$. This algorithm only does a constant number of operations 
		for each value of $F$ it computes, therefore its complexity is $O(n^{k+1})$.
		
		Now that $F$ is computed, we use it in our path generation algorithm to generate only the almost foldable paths. 
		Assume we have generated a path $P$ of size $n'$, its characteristic vector $C_P$ and we want to add a node at the end by connecting it to a node of label $a$. 
		Assume we have already computed $C_P$.	The algorithm checks if $F(n-n',\bar{a},-C_P)\neq \emptyset$.
		If it is the case the algorithm go on normally otherwise it backtracks since this extension cannot yield a non foldable path.
		This improvement only adds a single test at each step of the original algorithm, plus an addition of a constant sized vector to maintain the value of $C_P$. Therefore it is in constant amortized time.
		\qed
		\end{proof}

			The complexity of the precomputation may seem to be large but $k$ must be seen as a small constant (less than $4$).
			It is negligible with respect to the generation of paths, which is exponential in $n$ because of the number of non isomorphic paths. In practice, the precomputation takes only a few milliseconds for size of graphs up to $40$ on a regular desktop computer. On the other hand, this optimization makes the time to computes all the backbones much smaller than 
			the time to do the next steps.

			\paragraph{Almost foldable trees.} 
			
		Following the idea used to efficiently compute almost foldable paths, we give here two ways to generate
the almost foldable trees.

 When we extend a tree by a concatenation, it can be through any vertex. To keep the same 
 dynamic programming algorithm as for paths we should track all free vertices in the tree in construction, which would make the algorithm exponential time. 
 There are two solutions to this problem, the first and the one we have implemented is to compute a multidimensional array $A$ such that $A(n,C) = 1$ if there is a \emph{forest} $F$ of size $n$ such that $C_F = C$ and  $A(n,C) = 0$ otherwise.
We can thus test in our algorithm generating trees, whether any partial tree can be extended to a structure of the right size by a forest. 
Since we generate trees and not forests, we will sometimes expand a partial tree and obtain no almost foldable backbone in the end.
		
		The second solution is to change the characteristic vector of a backbone so that each of its component is the number of free vertices of some color
		positive or negative. In this way it is easy to compute an array $A$ such that $A(n,C) = 1$ if there is a \emph{tree} $T$ of size $n$ such that $C_T = C$ and  $A(n,C) = 0$ otherwise. Indeed, for each motif $M$ with a free vertex of color $a$, if for some $C$ $A(n,C) = 1$ and $C$ has a non-zero component $\bar{a}$ then there is a tree 
		of size $n+1$ with vector $C +C_M$ that is $A(n,C+C_M)=1$. The only drawback is that the size of $A$ and thus the complexity of the precomputation is $O(n^{2k+1})$, where $k$ is the number of positive colors while the size of $A$ in the solution we have implemented is $O(n^{k+1})$.

	\subsection{Folding of the backbones}\label{sec::fold}
				
		Let $G$ be a map of motifs, the \emph{fold} operation on the vertices $u$ and $v$ is adding the edge $(u,v)$ to $G$.
		The operation is valid if $u$ and $v$ are free, of complementary colors and in the same face of $G$. Therefore,
		the graph obtained after the fold is still a map of motifs. 
		In this section we generate from a backbone, by sequences of folds, all possible saturated maps of motifs. 
		
		The \emph{outline} of a face is the list in order of traversal of the free vertices.
		An outline is a circular sequence of vertices $(v_1,\dots,v_n) \in V^n$. 
		Sequence means that the order is significant and circular means that the starting point is not. For instance,
		$(v_1,v_2,v_3)$ and $(v_3,v_1,v_2)$ are the same circular sequence but are different from $(v_3,v_2,v_1)$.
		Remark that a tree or a path has a single outline, a cycle has two and a saturated map has only empty outlines.
		The color of an outline $(v_1,\dots,v_n)$ is the word $w_1\dots w_n$ with $w_i$ the color of $v_i$.
		Folding two vertices $v_i$ and $v_j$ in the same outline of color $W_1w_iW_2w_jW_3$ creates two outlines of color $W_3W_1$ and $W_2$. 
		The fold operation can then be seen as an operation from words over $\alphabet$ to multiset of words.
		Remark that this operation is very similar to the reduction of consecutive complementary parentheses which enables to define
		the classical Dyck language of balanced string parentheses. 
		
                \begin{figure}[h]
                  \centering
                  \hfill
                    \begin{tikzpicture}[scale=0.74]
                      \node[centre] (CJ) at (0,0) {\mot{J}};
                      \node[labelsat] (J1) at (0,1) {$a$};                 
                      \node[labelsat] (J2) at (0,-1) {$b$};
                      \node[centre](CV1) at (-2,1) {$\mot{V1}$};
                      \node[labelsat] (V11) at (-1,1) {$\overline{a}$};
                      \node[label] (V12) at (-3,2) {$a$};
                      \node[label] (V13) at (-3,0) {$a$};
                      \node[centre](CV2) at (2,-1) {$\mot{V2}$};
                      \node[labelsat](V21) at (1,-1){$\overline{b}$};
                      \node[label](V22) at (3,0) {$\overline{a}$};
                      \node[label](V23) at (3,-2){$\overline{a}$};
                      \draw (CJ) -- (J1);
                      \draw (CJ) -- (J2);
                      \draw (J1) -- (V11);
                      \draw (J2) -- (V21);
                      \draw (CV1) -- (V11);
                      \draw (CV1) -- (V12);
                      \draw (CV1) -- (V13);
                      \draw (CV2) -- (V21);
                      \draw (CV2) -- (V22);
                      \draw (CV2) -- (V23);
                      \node at (0,-3) {outline = $\{a,\overline{a},\overline{a},a\}$};
                    \end{tikzpicture}\hfill
                    \begin{tikzpicture}[scale=0.74]
                      \node[centre] (CJ) at (0,0) {\mot{J}};
                      \node[labelsat] (J1) at (0,1) {$a$};                 
                      \node[labelsat] (J2) at (0,-1) {$b$};
                      \node[centre](CV1) at (-2,1) {$\mot{V1}$};
                      \node[labelsat] (V11) at (-1,1) {$\overline{a}$};
                      \node[labelsat] (V12) at (-3,2) {$a$};
                      \node[label] (V13) at (-3,0) {$a$};
                      \node[centre](CV2) at (2,-1) {$\mot{V2}$};
                      \node[labelsat](V21) at (1,-1){$\overline{b}$};
                      \node[labelsat](V22) at (3,0) {$\overline{a}$};
                      \node[label](V23) at (3,-2){$\overline{a}$};
                      \draw (CJ) -- (J1);
                      \draw (CJ) -- (J2);
                      \draw (J1) -- (V11);
                      \draw (J2) -- (V21);
                      \draw (CV1) -- (V11);
                      \draw (CV1) -- (V12);
                      \draw (CV1) -- (V13);
                      \draw (CV2) -- (V21);
                      \draw (CV2) -- (V22);
                      \draw (CV2) -- (V23);
                      \draw (V12) -- (1,2)--(V22);
                      \node at (0,-3) {outline = $\{\overline{a},a\}$};
                    \end{tikzpicture}
                    \hfill
                  \caption{A map on
                    $\alphabet_M=\{\mot{V1},\mot{V2},\mot{J}\}$ and its outline before
                    and after a fold operation.}
                  \label{fig:outlineandfold}
                \end{figure}
                
		Applying a sequence of fold to a backbone to get a saturated map is the same as applying a sequence of reductions to the colors of an outline so that we obtain only empty words.
		We work from now on only on the words $w_1\dots w_n$ and on sequences of reductions.
		If in a sequence of reductions, the reduction is applied to $w_i$ and $w_j$ we say that the sequence \emph{pairs} $i$ with $j$.
		
		Let us call a word (or a multiset of words) which reduces to a multiset of empty words a \emph{foldable word}.
		As in the case of parentheses languages, we can restrict the reduction to consecutive complementary letters which transforms $W_1a\overline{a}W_2$ into the word $W_1W_2$. 
		Indeed, when a word is foldable, it can be reduced to empty words using \emph{the restricted reduction} of consecutive letters only by reordering the sequence of reductions.		
		We call \emph{result} of a sequence of reductions the set of pairs $(i,j)$ such that the sequence has paired $i$ and $j$. The previous remark shows that it is indeed a set of pairs
		and not a sequence.
                Our aim is to generate all different results of sequences of reductions on foldable words without redundancies.

		\begin{lemma}[Folklore]\label{lemma:folklore}
		 The restricted reduction on words is confluent i.e. each sequence of restricted reduction starting from a foldable word can be extended so that we get an empty word. 
		\end{lemma}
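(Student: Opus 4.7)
The plan is to reduce the statement to a combinatorial claim about non-crossing matchings. First I will show that a word $w$ is foldable if and only if the positions of $w$ admit a non-crossing perfect matching in which every matched pair carries complementary letters. The forward direction follows by induction on $|w|$: each general reduction step $W_1 a W_2 \overline{a} W_3 \to \{W_3 W_1, W_2\}$ corresponds to drawing a chord between two complementary positions in the polygon boundary labelled by $w$, splitting it into two smaller polygons whose labels are exactly $W_3 W_1$ and $W_2$; the matchings obtained inductively on the two pieces, together with the chord, give a non-crossing complementary matching of $w$. The converse is the same inductive construction read backward, since any non-crossing matching has a chord one can use as the first general reduction.

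The key combinatorial step is then: if a foldable word $w$ has adjacent complementary letters $w_i, w_{i+1}$, then $w$ admits a non-crossing complementary matching containing the pair $\{i, i+1\}$. Take any non-crossing complementary matching $M$; if $\{i,i+1\} \in M$ we are done, otherwise $M$ contains pairs $\{i, j\}$ and $\{i+1, k\}$ with $j \neq i+1$ and $k \neq i$. Because $w_j = \overline{w_i}$, $w_k = \overline{w_{i+1}} = w_i$ and $w_i = \overline{w_{i+1}}$, we get $w_j = \overline{w_k}$, so the swap $M' = (M \setminus \{\{i,j\},\{i+1,k\}\}) \cup \{\{i,i+1\},\{j,k\}\}$ is again a complementary matching. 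A case analysis on the sign of $j - i$ and on the position of $k$ relative to $i, i+1$ (only three combinations survive the non-crossing condition on $M$) shows that the new pairs $\{i,i+1\}$ and $\{j,k\}$ are themselves non-crossing and that no other pair of $M$ crosses $\{j,k\}$: any such pair was forced to lie entirely inside or entirely outside both $\{i,j\}$ and $\{i+1,k\}$, and in each case it remains nested inside or disjoint from $\{j,k\}$. The pair $\{i, i+1\}$ spans no positions and is therefore automatically non-crossing with every other pair.

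Given the swap lemma, the statement of the folklore lemma follows quickly. Any restricted reduction step $w \to w'$ removes adjacent complementary letters at some positions $i, i+1$; the swap lemma supplies a non-crossing complementary matching of $w$ containing $\{i,i+1\}$, and removing this pair yields a non-crossing complementary matching of $w'$, so $w'$ is again foldable. Since each restricted reduction strictly decreases $|w|$, any maximal sequence starting from a foldable word terminates at a foldable word with no adjacent complementary pair. But every non-empty foldable word has such a pair, namely the innermost pair of any non-crossing matching (its two positions are necessarily consecutive). Hence every maximal restricted-reduction sequence from a foldable word must reach the empty word.

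The main obstacle is the case analysis in the swap lemma; once a picture of the four endpoints $j, i, i+1, k$ on the polygon boundary is drawn, the verification that $M'$ stays non-crossing is routine, but it is the only place in the argument where some care is needed. Everything else is straightforward induction on length.
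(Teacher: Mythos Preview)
Your proof is correct and follows essentially the same swap argument as the paper: both show that removing an adjacent complementary pair $w_i w_{i+1}$ from a foldable word leaves a foldable word, by taking a witnessing pairing and, when it does not already match $i$ with $i+1$, exchanging the partners of $i$ and $i+1$ for each other. You package this via an explicit ``foldable $\Leftrightarrow$ non-crossing complementary matching'' characterisation and a fuller case analysis, while the paper works directly with reduction sequences and handles only the single configuration $k<i<i+1<l$ that survives in the circular setting; the underlying idea is identical.
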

	  \begin{proof}
		 To prove our lemma, it is enough to prove that if $S$ is the sequence which reduces a 
		 word $W= W_1w_iw_{i+1}W_2$ with $w_i=\overline{w_{i+1}}$, then $W_1W_2$ is foldable.
		 If $S$ pairs $i$ and $i+1$, then $W_1W_2$ can be reduced to the empty word by $S$.
		 We now assume that $S$ pairs $w_i$ with $w_k$ and $w_{i+1}$ with $w_l$, where $W = W_1^1w_kW_1^2w_iw_{i+1}W_2^1w_lW_2^2 $.
		 Remark that the case where $S$ pairs $w_i$ with $w_l$ and $w_{i+1}$ with $w_k$ is not possible because 
		 all letters between $i$ and $l$ must be paired together by definition and $i+1$ is between $i$ and $l$ but not $k$.
		 Inside the sequence $S$, we can find subsequences which reduce
		 $W_1^2$, $W_2^1$ and $_W1^1W_2^2$ to empty words since we are allowed to reduce consecutive letters only. Therefore $W_1W_2 = W_1^1w_kW_1^2W_2^1w_lW_2^2$ can be reduced to the empty word. First the sequences reducing $W_1^2W_2^1$ are used to obtain the word $W_1^1w_kw_lW_2^2$. Then one step of reduction remove $w_kw_l$ which are of complementary color by definition. Finally we obtain 
		 $W_1^1W_2^2$ which is foldable.	\qed	  \end{proof}

	    As a consequence of this lemma, we get a simple algorithm for testing whether a word is foldable:
	    reduce the word as long as it is possible and if an empty word is obtained, the word is foldable.
	    \begin{proposition}    
 There is a linear time algorithm to test whether a word is foldable.
\end{proposition}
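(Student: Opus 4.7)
The plan is to use a classical stack-based scan, in the same spirit as the standard linear-time recognizer for the Dyck language (an analogy already pointed out earlier in the paper). I would maintain a stack of symbols from $\alphabet$ and process the input $w_1\cdots w_n$ from left to right: when reading $w_i$, if the stack is non-empty and its top symbol equals $\overline{w_i}$, pop that symbol, otherwise push $w_i$. At the end, the answer is ``foldable'' if and only if the stack is empty.

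For correctness, I would argue that each pop performs exactly one restricted reduction on two currently adjacent complementary letters of the residual word, while a push is performed only when no such reduction is available between the incoming letter and the current top. Hence the content of the stack at termination, read from bottom to top, is the normal form of $w_1\cdots w_n$ under restricted reduction. By Lemma~\ref{lemma:folklore}, this normal form is the empty word if and only if $w_1\cdots w_n$ is foldable, which justifies the output of the algorithm.

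For the complexity, each letter is pushed at most once and popped at most once, and each comparison with the top of the stack takes constant time since $\alphabet$ has constant size; so the total running time is $O(n)$. The only point needing care is the legitimacy of this greedy, left-to-right choice of reductions: a priori one might fear that a different ordering of reductions is needed to reach the empty word, so that the stack algorithm could get ``stuck'' on a foldable input. This is precisely the main (and only) obstacle, and it is exactly what Lemma~\ref{lemma:folklore} rules out, so I do not expect any further difficulty.
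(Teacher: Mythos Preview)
Your proposal is correct and is essentially the same algorithm as the paper's: the paper implements the left-to-right greedy reduction with a doubly linked list and a current position $i$ (decrementing $i$ after each removal, incrementing it otherwise), which is exactly your stack scan in disguise—the list prefix up to position $i$ is your stack, and both arguments appeal to Lemma~\ref{lemma:folklore} in the same way to justify that a non-empty irreducible residue certifies non-foldability.
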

\begin{proof}
 The word is represented by a doubly linked list of its letters. At a given step of the algorithm we are at some position $i$ in the list.
 If the letters at position $i$ and $i+1$ in the list are complementary, they are removed and $i$ is set to be $i-1$ if possible, $0$ otherwise.
 If the letters are not complementary, $i$ is incremented. The algorithm stops and decides that the word is foldable when the list is empty.
 If $i$ is at some point the last element of the list then the algorithm stops and decides that the word is not foldable.
 The algorithm is clearly in linear time, since at each step either the size of the list decreases or the current position increases.
 Finally this algorithm is correct, because if it stops without removing every element in the list, it means that there are no two consecutive
 complementary letters left. Therefore there are no possible further restricted reductions and the obtained word is not foldable. By Lemma \ref{lemma:folklore}, since the reduction is confluent, 
 the original word is also not foldable. \qed	
\end{proof}
	    
	    We use this algorithm each time we produce a backbone to test whether it can be folded into a saturated map of motifs.
            Note that, even if we generate almost foldable backbones only, we may generate some which are not foldable such as those with outline $ba\bar{b}\bar{a}$.

            \begin{proposition}
             There is an algorithm which enumerates all distinct results of sequences of reduction on a foldable word,
             with a linear delay and a quadratic precomputation.
            \end{proposition}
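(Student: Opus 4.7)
The plan is a backtracking enumeration that always pairs the leftmost unmatched letter first, prevented from exploring sterile branches by a precomputed foldability oracle. By Lemma~\ref{lemma:folklore}, the distinct results of reduction sequences are exactly the non-crossing perfect matchings of the positions of $w$ into complementary pairs, so enumerating such matchings without redundancy is what I must do.

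\textbf{Precomputation.} I would compute a two-dimensional boolean table $M$ with $M[i][j]=1$ iff the subword $w_i\cdots w_j$ is foldable. To fill $M$ in $O(n^2)$ time I fix a starting index $i$ and run the linear-time stack reduction from the preceding proposition on $w_i,w_{i+1},\ldots,w_n$, flagging $M[i][j]=1$ exactly at those positions $j$ where the stack becomes empty and $M[i][j]=0$ otherwise; all other $M[i][j]$ (with $i>j$) are set to~$1$ by convention. Running this once for every $i$ costs $O(n^2)$ time and $O(n^2)$ space.

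\textbf{Enumeration.} I maintain a stack of active intervals, initialized to $\{[1,n]\}$, plus the current list of matched pairs. At each recursive step I take the active interval $[l,r]$ with smallest $l$ and scan $k=l+1,l+2,\ldots,r$ in order; whenever $w_l=\overline{w_k}$, $M[l+1][k-1]=1$, and $M[k+1][r]=1$, I append the pair $(l,k)$ to the current result, push the two nonempty intervals among $[l+1,k-1]$ and $[k+1,r]$, and recurse. When the interval stack is empty I output the current matching and backtrack, restoring the scan pointer at the previous level. Correctness is immediate: the leftmost-partner convention leaves no choice once the matching is fixed, so every valid non-crossing complementary matching is produced exactly once.

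\textbf{Delay.} The table $M$ guarantees that every interval the algorithm ever examines is foldable, hence admits at least one valid $k$; consequently no branch of the recursion tree is sterile, and the tree has depth at most $n/2$. Between two consecutive outputs the algorithm (i)~backtracks along a path of length $b\le n/2$, each step costing $O(1)$; (ii)~advances the scan pointer at the single turn-around node, which costs at most the length of that interval, i.e.\ $O(n)$; and (iii)~descends a path of length $d\le n/2$, at each freshly entered interval performing a scan to locate the first valid $k$. The intervals entered on one descent are pairwise disjoint subsets of $[1,n]$, so the total descent-scanning cost telescopes to $O(n)$. Summing (i)--(iii) yields $O(n)$ delay. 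The delicate point, which I would verify carefully, is precisely this disjointness argument that prevents a naive $O(n^2)$ worst-case delay on the descent.
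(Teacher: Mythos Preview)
Your precomputation is correct and in fact simpler than the paper's: running the stack reduction from each start index $i$ and flagging the positions where the stack empties is a clean $O(n^2)$ way to fill $M$. The paper instead fills $M$ by a dynamic program, using confluence to cut the inner loop to $O(1)$. Your enumeration scheme (always match the leftmost free index, split the interval, recurse) is the same as the paper's.

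The gap is in the delay analysis. The claim that ``the intervals entered on one descent are pairwise disjoint'' is false: on a descent into left children the intervals are \emph{nested}, not disjoint. Take $W = a^{n/2}\bar a^{\,n/2}$. In $[1,n]$ the scan for the first valid $k$ runs from $2$ all the way to $k_1=n$ (every earlier $k$ fails the foldability test), costing $n-1$. You then recurse into $[2,n-1]$, where the scan again runs to $k_2=n-1$, costing $n-3$; and so on. The unique output therefore costs $\Theta(n^2)$, not $O(n)$. Your own caveat (``the delicate point\dots is precisely this disjointness argument'') is exactly where the argument breaks.

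The paper avoids this by spending the quadratic precomputation on one more object: for each index $i$ a sorted list $L_i$ of all $j>i$ with $w_i=\overline{w_j}$ and $M[i{+}1][j{-}1]=1$. The key observation (which you do not use) is that if the current interval $[l,r]$ is foldable, then $j\in L_l$ with $j\le r$ \emph{automatically} forces $M[j{+}1][r]=1$: reduce $w_{l+1}\cdots w_{j-1}$ to empty, then cancel $w_lw_j$, and confluence gives foldability of the remainder. Hence your third test is redundant, and the set of valid partners of $l$ in $[l,r]$ is just an initial segment of $L_l$, traversable in $O(1)$ per child. That is what makes each node of the recursion tree cost $O(1)$ and the delay $O(\text{depth})=O(n)$. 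Your table $M$ already lets you build all the $L_i$ in $O(n^2)$; adding that step repairs the proof.
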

            \begin{proof}
            
            For a given word $W$ we first build the lists $L_i$ which contain the set of indices $j > i$ such that 
            $w_i$ can be folded with $w_j$ and the obtained set of words is still foldable.
            
            The lists $L_i$ are built from a boolean matrix $M$ such that $M_{i,j}$ is true if and only if 
             the word $w_i\dots w_j$ is foldable. The matrix is computed by dynamic programming:
            $M_{i,i+1}$ is true if and only if $w_i$ and $w_{i+1}$ are complementary. We compute $M_{i,j}$ once we have computed 
            all $M_{i',j'}$ such that $(j'-i') < (j-i)$ by using the fact that $w_i\dots w_j$ is foldable if and only if $w_i\dots w_k$
            and $w_{k+1} \dots w_j$ are foldable for some $k$ in $[i+1,j]$ or $w_i$ and $w_j$ are complementary and $w_{i+1}\dots w_{j-1}$ is foldable. By this method, the matrix $M$ is computed in time cubic in the size of the word. 
            In fact, by Lemma \ref{lemma:folklore}, if there is a $k$ such that $w_i\dots w_k$
            and $w_{k+1} \dots w_j$ are foldable, then for all $l$ such that $w_i\dots w_l$ is foldable, then  $w_{l+1} \dots w_j$ is foldable.
            We store for each $i$ the smallest $k>i$ such that $w_i\dots w_k$ is foldable. Hence we can decide whether there is a $k$ such that $w_i\dots w_k$ is foldable in constant time and we compute the matrix $M$ in quadratic time.

            Remark that a sequence of reductions applied to a word $W$ yields a set of subwords which are consecutive letters of $W$. Therefore we can represent the result of several reductions by a set of pairs 
            $\{(l_1,r_1),\dots,(l_k,r_k)\}$ with $(l_i,r_i)$ representing the word $w_{l_i}\dots w_{r_i}$ and $l_i < r_i < l_{i+1}$.
            We  build the results of sequences of reductions in a recursive way.        
            Assume we have already built a result $R$ through a sequence of reductions applied to $W$, which has produced the set $\{(l_1,r_1),\dots,(l_k,r_k)\}$.
            We consider $l_1$, the index of the first letter which has not been reduced and we do the reduction with every possible letter of index 
            $i \in [l_1,r_1] \cap L_{l_1}$ which produces the set $\{(l_1,i),(i+1,r_1)\dots,(l_k,r_k)\}$ and the result $R\cup \{(l_1,i)\}$.
            By using recursively this algorithm starting on $W$, we obtain all possible results $R$ corresponding to a reduction to a multiset of empty words.
            It is not possible to generate twice a result since at any point of the algorithm we make recursive calls on $R\cup \{(l_1,i)\}$
            for different values of $i$ which makes the results produced by each call disjoint.
            Between two recursive calls we do only a constant number of operations, therefore the delay 
            is bounded by the depth of the tree of recursive calls, that is the size of the word $W$.   \qed 
            \end{proof}
          
	    The enumeration algorithm we have described is \textbf{exponentially better} than the naive one where each possible letter is 
	    folded when it is next to a complementary letter and so on recursively. The complexity of the naive algorithm is proportional to the number of sequences of reductions while our is proportional to the number of results. For instance, on words of the form $W^n$ with 
	    $W = a\bar{a}\bar{a}a$, there is only one result but $(2n)!$ sequences of reductions.
%

	\subsection{Dealing with isomorphic copies}\label{sec::isomorphism}
	
	 Since the construction process does not guaranty uniqueness of the generated maps, 
	we need to detect during the enumeration the isomorphic copies of already generated maps to discard them.
	To do that we need to compute a unique signature for each map and 
	  we must store all produced maps and their signatures. 
	  Since the number of maps grows exponentially with their size, they are stored in a dynamic set structure 
	  which supports logarithmic addition and research of elements. In our implementation we have used an AVL whose key is 
	  the signature. Hence each time a new map is produced, we compute its signature and if this signature is already in the AVL, it is simply not inserted.

	From a theoretical point of view, planar isomorphism is well understood since it has been proved to be solvable 
	in almost linear time~\cite{hopcroft1974linear} and logarithmic space~\cite{datta2009planar}. 
	However this algorithm is not practical and hard to implement as observed in~\cite {kukluk2003algorithm},
	especially if we want a signature rather than just an isomorphism test.
	This is particularly true for our small graphs of size about $20$, which is the reason why we rely on a 
	simpler algorithm of quadratic complexity in the spirit of \cite{weinberg1966simple}.
	 The idea is that in a map,
	when a first edge is fixed we can do a \emph{deterministic} traversal of the graph using the order on each neighborhood.
	The signature is the least lexicographic traversal amongst the traversals beginning by all edges of the map.

	Let us describe precisely the quadratic isomorphism algorithm. 
	All the signatures are numbers in a base $B$
	with $B=n+|\alphabet|+|\alphabet_M|$.
	The first step that is common to all the maps of motifs of the same size $n$ is to assign to each color 
	in $\alphabet$ and $\alphabet_M$ a different digit in $[n , n+|\alphabet|+|\alphabet_M|)$ in base $B$.
	In a map of motifs $G=(V_c,V,E,\Next)$ of size $n$ and for any edge 
	$(c,u) \in E$ with $c\in \Vc$ we perform a deterministic depth first search that will define the signature of $G$ starting at $(c,u)$.
	Since signatures are numbers, they can be easily compared and \emph{the signature} of $G$ will be the minimum number
	over all starting points.
	
	For computing a signature starting at $(c,u)$, at first visit of each vertex in $V_c$ assign an index number that is a digit in the range $[0,n)$ in the base $B$.
	From $c$ visit its neighbor $u$: since the map is saturated $u$ is connected to a vertex $v \in V$
	and $v$ is connected to a vertex $c' \in \Vc$.
	Construct the signature by concatenating
	the index number of $c$, the digits of the colors $c$, $u$, $v$ and $c'$ and the index number of $c'$.
	If $c'$ is already visited backtrack and continue the visit from $(c,next((c,u)))$ else continue the visit starting at $(c',v')$ with $(c',v')=\Next((c',v))$ 
	and so on until all quadruplets $(c,u,v,c')$ are visited once.
	At the end, we obtain a signature in base $B$ for the starting point $(c,u)$.
	Note that the signature itself is of size linear in $n$. Given any signature one may exactly reconstruct the graph. 
	Conversely two graphs which are isomorphic have the same signature because the signature computation does not take into account the order or name of the nodes.

	We make a simple optimization, which is crucial, since profiling our algorithm reveals that it spends more than half of its time computing signatures.
	We assign the lower digits to the colors of $c$ and $u$ such that the number of couples $(c,u)$ is minimal and non zero. Since the signatures are constructed with the most significant bit first, during the construction of a signature, we test for each digit added if the signature is at this point greater than the minimal one. Thus we can cut very efficiently in the signature calculation process.
	
Moreover, the computed signature allows to detect chiral molecules, a very important notion in chemistry. Two maps are chiral if
one is isomorphic to the other when the order of the next predicate is reversed for all neighborhoods.

\subsection{Indices computed on the molecular map}\label{sec::indices}
	A molecular map is a candidate to be a ``good'' cage for chemistry.
	The definition of a ``good'' cage is merely topological: the 3D shape must be close to a sphere, 
	it must be resistant to deformations and cuts and it must have an "entrance".
	We are able to check if a molecule satisfies or not these requirements only by considering the structure of its molecular map:
	First the map is \emph{planar} and \emph{connected} by construction. In quadratic time we compute the \emph{equivalence classes of vertices up to automorphism}, using the same technique as to compute a signature, which helps measure the sphericity of the cage. 
	The entrance is given by the \emph{size of its largest face}, which is easily computed in linear time.
	The resistance of a map is given by its \emph{minimum sparsity}. 
	
From a large set of experiments, these indices have proved to be realistic to the chemist on several examples (see Sec.~\ref{sec::validation}). They are then used in our implementation to limit the number of molecular maps output by the program, which would otherwise be in such great number that a chemist could not try to study them all. For instance, all maps with a small minimum sparsity are filtered out.

	\paragraph{Distribution of the sizes of faces}\label{sec::distfaces}
		The faces size is an important parameter in the cage construction. The chemist wants a cage with an "entrance". In graph terms we seek for graphs with one large face and all the others faces of size around the mean size, which makes the molecule more spherical in practice.
		The distribution of the face size is straightforward to compute. As an indicator we compute the size difference between the two largest faces divided by the mean size.
		This indicator is zero when there is two largest faces with the same size and grows with the entrance size.
		
	\paragraph{Equivalence classes of the vertices}\label{sec::auto}
		Two vertices (motifs) of a molecular map are in the same class if it exists an automorphism that send one to the other.
		We compute the equivalence classes of all vertices: If the signature starting form $(c_1,u_1)$ is equal to the signature starting at $(c_2,u_2)$ the motif centered on $c_1$ is in the same class as he motif centered on $c_2$.		
		The chemist, when synthesizing a molecule corresponding to a molecular map,
		will use the same compound for all motifs in the same equivalence class. In addition the less the number of classes the more the molecule has a spherical shape. 
	
	\paragraph{Minimum sparsity}
	We now define the sparsity and explain how to compute it, since it is 
	the most relevant index and the hardest to compute. 
	A cut of a graph $G=(V,E)$ is a bipartition of $V$.
	The \emph{size} of a cut $S=(S_1,S_2)$ is the number of edges with one end in $S_1$ and the other in $S_2$.
	The sparsity of a cut is $sparsity(S) = \frac{size(S)}{\min(|S_1|,|S_2|)}$.
	The Sparsest Cut problem is to find the minimum sparsity over all cuts.
		We first implemented a brute-force algorithm, using a Gray code which enumerates all possible partitions of the set of vertices
		in time $O(2^n)$ where $n$ is the number of vertices in our graph. Since we were using a Gray code, the partition 
		changes at each step by only one element and the cut can be computed in constant time from the previous one.
		Therefore we have a simple algorithm with complexity $O(2^n)$ where $n$ is the number of vertices in our graph,
		which is useful for $n$ up to twenty but not practical for larger sizes.
				
		Although computing the minimum sparsity is $\NP$-complete in general (minimum cut into bounded set in~\cite{garey1979computers}),  there is a polynomial time algorithm when the graph is planar~\cite{park1993finding}. Since the time to compute the minimum sparsity was the limiting factor of our program, we have implemented and adapted to our case this more complicated algorithm (which has never been done as far as we know).
		
		The main idea is that a cut in a graph corresponds exactly to a cycle in the dual graph (see \cite{diestel2005graph} for graph definitions useful in this paragraph).
		A weight is associated to each cycle of the dual: if the corresponding cut in the primal partitions it into $S_1$ and $S_2$, the weight is $\min(|S_1|,|S_2|)$.
		From a spanning tree of the dual, we build a base of its fundamental cycles. A fundamental cycle is given by any edge not in the spanning tree completed by edges of the spanning tree to form a minimal cycle. 
		From symmetric differences of fundamental cycles, we can generate every cycle and its weight.
		
		For each edge in the dual, we build a graph such that paths from a given vertex correspond to cycles of the dual which use the edge.
		Moreover, the weight of the cycle can be read in the last vertex of the path, and the size of the corresponding cut is the length of the path.
		Therefore, computing a single source shortest-path in each of these graphs enables us to compute the value of the sparsest-cut. While in the original article this was done 
		by a modified Dijkstra algorithm, we use a breadth first-search. This is faster and it enables us to use a good heuristic:
		at any point of one of the breadth first-search, we know the current distance from the source can only increase.
		We can stop the search, if this distance divided by the maximal weight (equal to the number of vertices) is larger than the current minimum sparsity value.
		This implementation has \textbf{very good practical performances}: on a regular desktop computer the mean time to compute the sparsest cut 
		of a graph of size $30$ is $0.2$ ms while the brute force algorithm needs $6000$ ms.
	
%
%
%

\section{Metamotifs}\label{sec::metamotifs}

		From a base of motifs, we can generate a new one, by concatenation of elements of the base.
		The new motifs are called metamotifs. It is useful, if the new elements added to the base can be used to remove other elements of the base
		so that some good properties are enforced.
		
		For instance, one can remove the elements of degree $2$ (if they cannot be concatenated together), while not increasing the degree of motifs in the base.
		Every motif of degree $2$ is concatenated in every possible way to the other motifs and deleted. From our example $\{\mot{I},\mot{Y}\}$, we obtain a base $\{\mot{Y}_0,\mot{Y}_1,\mot{Y}_2,\mot{Y}_3\}$ where the $\mot{Y}_i$ are of degree $3$ and have $i$ vertices of $V$ labeled by $\bar{a}$ and the others by $a$.
		If we now generate all molecular maps of size $n$ based on $\{\mot{Y}_0,\mot{Y}_1,\mot{Y}_2,\mot{Y}_3\}$ it is easy to convert them into molecular maps based on $\setofmotifs$. The converted maps are of size exactly $ \frac{5}{2}n$ since there are $\frac{3}{2}$ \mot{I} for each \mot{Y}. 
		
		\begin{figure}
		\begin{center}
		  \begin{tikzpicture}[scale=0.74]
    \node[centre]   (C1) at (-1, 0) {$\mot{Y}$};
    \node[label] (A1) at ( 0,-1) {$\mathbf{\overline{a}}$};
    \node[label] (A2) at ( 0, 0) {$\mathbf{\overline{a}}$};
    \node[labelsat] (A3) at ( 0, 1) {$\mathbf{\overline{a}}$};
    \draw (C1) -- (A1);
    \draw (C1) -- (A2);
    \draw (C1) -- (A3);

    \node[centre]   (C2) at (2,1) {$\mot{I}$};
    \node[labelsat] (B1) at (1,1) {$\mathbf{a}$};
    \node[label] (B2) at (3,1) {$\mathbf{a}$};
    \draw (C2) -- (B1);
    \draw (C2) -- (B2);

         \draw (A3) -- (B1);
  \end{tikzpicture}
  \hspace{2cm}
  \begin{tikzpicture}[scale=0.74]
		  
    \node[centre]   (C1) at (-1, 0) {$\mot{Y}$};
    \node[labelsat] (A1) at ( 0,-1) {$\mathbf{\overline{a}}$};
    \node[labelsat] (A2) at ( 0, 0) {$\mathbf{\overline{a}}$};
    \node[labelsat] (A3) at ( 0, 1) {$\mathbf{\overline{a}}$};
    \draw (C1) -- (A1);
    \draw (C1) -- (A2);
    \draw (C1) -- (A3);

    \node[centre]   (C2) at (2,1) {$\mot{I}$};
    \node[labelsat] (B1) at (1,1) {$\mathbf{a}$};
    \node[label] (B2) at (3,1) {$\mathbf{a}$};
    \draw (C2) -- (B1);
    \draw (C2) -- (B2);

    \node[centre]   (C3) at (2,0) {$\mot{I}$};
    \node[labelsat] (D1) at (1,0) {$\mathbf{a}$};
    \node[label] (D2) at (3,0) {$\mathbf{a}$};
    \draw (C3) -- (D1);
    \draw (C3) -- (D2);

    \node[centre]   (C4) at (2,-1) {$\mot{I}$};
    \node[labelsat] (E1) at (1,-1) {$\mathbf{a}$};
    \node[label] (E2) at (3,-1) {$\mathbf{a}$};
    \draw (C4) -- (E1);
    \draw (C4) -- (E2);

        \draw (A3) -- (B1);
        \draw (A2) -- (D1);
        \draw (A1) -- (E1);
    
  \end{tikzpicture}
		\caption{Representation of the two metamotifs $\mot{Y}_1$ and $\mot{Y}_3$, built from $\mot{Y}$ and $\mot{I}$ }
		\end{center}
\end{figure}
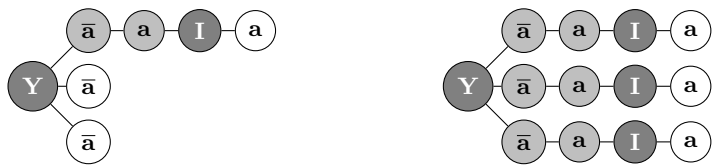
		
		Note that the isomorphism test is done on the generated map of motifs seen as made of the motifs of the first base,
		otherwise we could not detect some isomorphic copies.
		
		The choice of a new base can also be interesting if it decreases its size or the size of the alphabet.
		It is a way to encode constraints on some specific base understood by the user. 
		For instance the base \mot{X} ($a,a,a,a$), \mot{V} ($\bar{a},\bar{a},b$) and \mot{I} ($\bar{b},\bar{b}$) can be turned into the base 
		\mot{X1} ($a,a,a,a$), \mot{X2} ($\bar{a},\bar{a},\bar{a},\bar{a}$) because with \mot{I} we can only connect two \mot{V}.
		It is now easy to see that we are generating the $4$-regular planar bipartite maps. In that particular case, the efficiency of our algorithm is not improved since the generated paths are the same. 
	
\section{Results\label{sec::results}}

The code and the exhaustive results of our approach can be found at the following address
\texttt{http://kekule.prism.uvsq.fr}. For several sets of motifs, one can find the set of generated maps and
their indices. We stopped all computations at $300$ seconds an put a -- in the tables when the algorithm has not finished.
All times are given in second, a.f. stands for almost foldable.

{\small
\begin{table}
\caption{Number of backbones and generation time for \mot{J}($a,b$), \mot{V1}($\bar{a},\bar{a},b$), \mot{V2}($a,\bar{b},\bar{b}$)}
\centerline{
\begin{tabular}[h]{|c||r|r||r|r||r|r||r|r|}
\hline
 Size & \multicolumn{2}{c||}{\textbf{Tree} }  &\multicolumn{2}{c||}{\textbf{A.f. tree} } &\multicolumn{2}{c||}{\textbf{Path} }& \multicolumn{2}{c|}{\textbf{A.f. path}} \\
\hline
    & Backbones & Time & Backbones & Time & Backbones & Time & Backbones & Time \\ 
\hline
   9 & $5.70~10^5$ &   0.09 & $3.85~10^5$ &  0.05 & $4.92~10^4$ &  0.01 & $9.87~10^3$ &  0.01 \\
  12 & $1.16~10^8$ &  14.28 & $5.55~10^7$ &  7.98 & $1.77~10^6$ &  0.28 & $2.46~10^5$ &  0.08 \\
  15 &          -- &     -- &          -- &    -- & $7.26~10^7$ & 10.88 & $6.17~10^6$ &  1.74 \\
  18 &          -- &     -- &          -- &    -- &          -- &    -- & $1.56~10^8$ & 45.84 \\
\hline
\end{tabular}
}
\label{table::backbone}
\end{table}
}

In Tab.~\ref{table::backbone}, we give the time to compute \emph{the backbones} and the number of backbones generated (we also count isomorphic copies which are generated). The time to compute cycles is not given since they are computed from paths, 
the difference is seen in the number of folded maps and the time to generate them.
{\small
\begin{table}
\caption{Number of maps and time to generate them and their indices for \mot{J}($a,b$), \mot{V1}($\bar{a},\bar{a},b$), \mot{V2}($a,\bar{b},\bar{b}$)}
\centerline{
\begin{tabular}[h]{|c||r|r|r||r|r|r||r|r|r|}
\hline
 Size & \multicolumn{3}{c||}{\textbf{A.f. tree} }  &\multicolumn{3}{c||}{\textbf{A.f. path} } & \multicolumn{3}{c|}{\textbf{A.f. cycle}} \\
\hline
    & A.f. backb. & Maps & Time & A.f. backb. & Maps & Time & A.f. backb. & Maps & Time \\ 
\hline
   9 &    $3.85~10^5$ &  236 & 0.32 &  $9.87~10^3$ &  236 &  0.03 & $8.06~10^3$ &    148 &   0.01 \\
  12 &    $5.55~10^7$ &  4476 & 53.99 &  $2.46~10^5$ & 4463 &  0.71 & $2.03~10^5$ &   1931 &   0.32 \\
  15 &        -- & $> 98100$ &   -- &   $6.17~10^6$ & 97112 &  28.40 & $5.13~10^6$ &   29164 &  8.81 \\
  18 &        -- & -- &   -- & $1.56~10^8$ & 2307686 &  -- & $1.30~10^8$ &   501503 &  184.48 \\
\hline
\end{tabular}
}
\label{table::maps}
\end{table}
}
In Tab.~\ref{table::maps}, we give the time to generate \emph{all unique maps} and their indices.
Remark that the number of unique maps generated by trees, paths or cycles are different, since only the generation from trees is exhaustive. 
However, most of the maps with the largest minimum sparsity are generated with paths or cycles as backbones.
%

\section{Chemical validation}\label{sec::validation}

Using the set of motifs $\{\mot{X},\mot{I}\}$, if we take for each size of maps the
ones with the lowest cut indices, we find the molecules obtained by
Warmuth and Liu (Solvent effects in thermodynamically controlled multicomponent nanocage syntheses) 
in real-life experiments. An example of a  molecular map built on $\{\mot{X},\mot{I}\}$ is given in Fig.~\ref{fig:warmuth} (in 3 dimension for easier reading).
The white elements are \mot{X} and the red \mot{I}. Its chemical realization by Warmuth and Liu is also given in the same figure.

\begin{figure}[h]
 \centering
 \includegraphics[scale=0.22]{./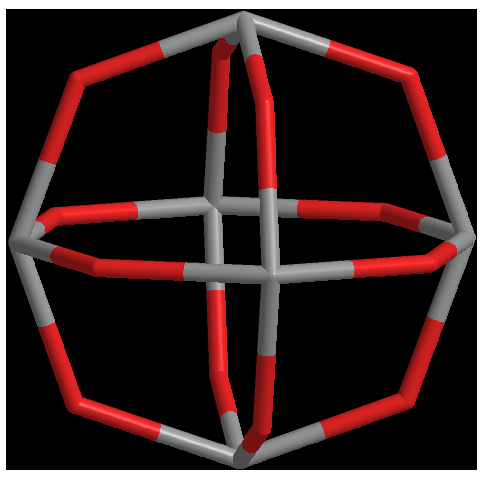}
  \hspace{2cm}
\includegraphics[scale=0.17]{./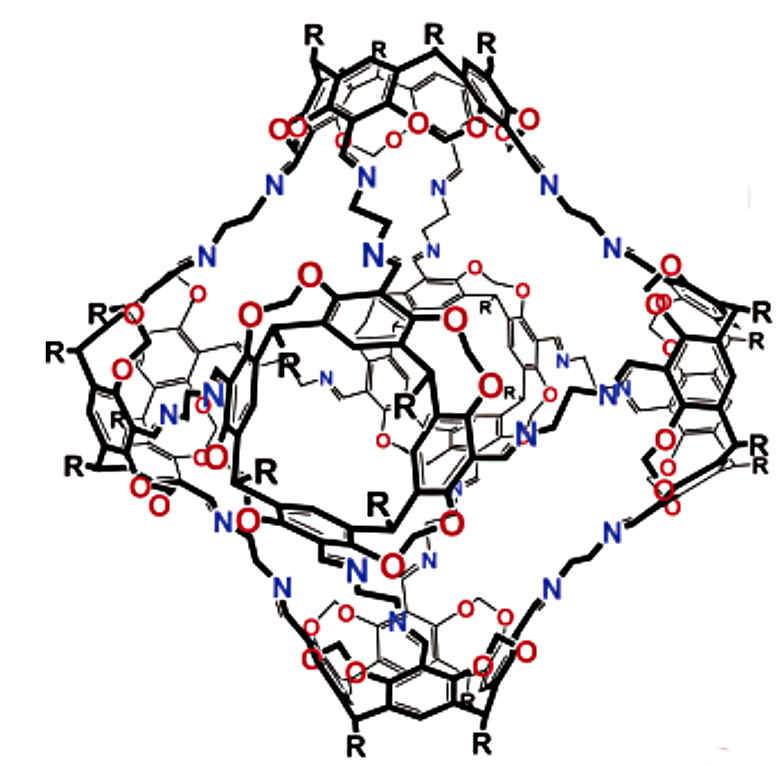}
 \caption{A cage obtained by Warmuth with $6$ \mot{X} and $12$ \mot{I} }
 \label{fig:warmuth}
\end{figure}

From all maps of size $8$ based on \mot{Y} ($a,a,a$), \mot{V1} ($\bar{a},b,b$) and \mot{V2} ($\bar{a},\bar{b},\bar{b}$),
we have selected the map of Fig.~\ref{fig:oliviermol} because it has good indices. This has led to the conception of a real molecule which can be represented by this molecular map. 
It is given in Fig.~\ref{fig:oliviermol}, the blue parts being the \mot{Y}, the black parts the \mot{V1} and the green parts the \mot{V2}.

\begin{figure}[h]
 \centering
 \begin{tikzpicture}[scale=0.70]
   \node[centre]   (Y1) at (0, 0) {$\mot{Y}$};
   \node[centre]   (V11) at (1.5,0) {$\mot{V1}$};
   \node[centre]   (V21) at (3,0) {$\mot{V2}$};
   \node[centre]   (Y2) at (4.5,0) {$\mot{Y}$};
   \node[centre]   (V12) at (1, 1.5) {$\mot{V1}$};
   \node[centre]   (V13) at (1, -1.5) {$\mot{V1}$};
   \node[centre]   (V22) at (3.5, 1.5) {$\mot{V2}$};
   \node[centre]   (V23) at (3.5, -1.5) {$\mot{V2}$};
    \draw (Y1) -- (V11);
    \draw (V11) -- (V21);
    \draw (V21) -- (Y2);
    \draw (Y1) -- (V12);
    \draw (Y1) -- (V13);
    \draw (Y2) -- (V22);
    \draw (Y2) -- (V23);
    \draw (V22) -- (V12);
    \draw (V23) -- (V13);
    \draw (V12) -- (V21);
    \draw (V11) -- (V23);
    \draw (V22) to[out= 120 ,in= 60] (-0.5,1.75) to[out= 240 ,in= 180]  (V13);
  \end{tikzpicture}
  \hspace{2cm}
\includegraphics[scale=0.4]{./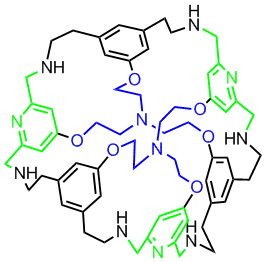}
 \caption{A cage based on $\{\mot{I},\mot{V1},\mot{V2}\}$}
 \label{fig:oliviermol}
\end{figure}

\bibliographystyle{splncs}
\bibliography{biblio.bib}

\begin{thebibliography}{10}

\bibitem{HTC10}
Holst, J., Trewin, A., Cooper, A.:
\newblock Porous organic molecules.
\newblock Nature Chem. \textbf{2} (2010)  915--920

\bibitem{phd_strozecki}
Strozecki, Y.:
\newblock Enumeration complexity and matroid decomposition.
\newblock PhD thesis, Universit\'e Paris Diderot - Paris 7 (2010)

\bibitem{li1999advantages}
Li, G., Ruskey, F.:
\newblock The advantages of forward thinking in generating rooted and free
  trees.
\newblock In: ACM-SIAM symposium on Discrete algorithms. (1999)  939--940

\bibitem{Lis85}
Liskovets, V.:
\newblock Enumeration of nonisomorphic planar maps.
\newblock Selecta Math. Soviet. \textbf{4} (1985)  304--323

\bibitem{CV81}
Cori, R., Vauquelin, B.:
\newblock Planar maps are labelled trees.
\newblock Canadian Journal Math. \textbf{33}(5) (1981)  1023--1042

\bibitem{brinkmann2007fast}
Brinkmann, G., McKay, B.D.:
\newblock Fast generation of planar graphs.
\newblock MATCH Commun. Math. Comput. Chem \textbf{58}(2) (2007)  323--357

\bibitem{brinkmann2010cage}
Brinkmann, G., Friedrichs, O.D., Lisken, S., Peeters, A., Van~Cleemput, N.:
\newblock Cage—a virtual environment for studying some special classes of
  plane graphs—an update.
\newblock MATCH Commun. Math. Comput. Chem \textbf{63}(3) (2010)  533--552

\bibitem{barth2013map}
Barth, D., Boudaoud, B., Couty, F., David, O., Quessette, F., Vial, S.:
\newblock Map generation for {CO}$_2$ cages.
\newblock In: Computer and Information Sciences III.
\newblock Springer (2013)  503--510

\bibitem{holton1988smallest}
Holton, D.A., McKay, B.D.:
\newblock The smallest non-hamiltonian 3-connected cubic planar graphs have 38
  vertices.
\newblock Journal of Combinatorial Theory, Series B \textbf{45}(3) (1988)
  305--319

\bibitem{aldred1999cycles}
Aldred, R.E., Bau, S., Holton, D.A., McKay, B.D.:
\newblock Cycles through 23 vertices in 3-connected cubic planar graphs.
\newblock Graphs and Combinatorics \textbf{15}(4) (1999)  373--376

\bibitem{hopcroft1974linear}
Hopcroft, J.E., Wong, J.K.:
\newblock Linear time algorithm for isomorphism of planar graphs (preliminary
  report).
\newblock In: ACM symposium on Theory of computing. (1974)  172--184

\bibitem{datta2009planar}
Datta, S., Limaye, N., Nimbhorkar, P., Thierauf, T., Wagner, F.:
\newblock Planar graph isomorphism is in log-space.
\newblock In: Computational Complexity. (2009)  203--214

\bibitem{kukluk2003algorithm}
Kukluk, J.P., Holder, L.B., Cook, D.J.:
\newblock Algorithm and experiments in testing planar graphs for isomorphism.
\newblock Journal of Graphs Algorithms and Applications \textbf{8}(3) (2004)
  313--356

\bibitem{weinberg1966simple}
Weinberg, L.:
\newblock A simple and efficient algorithm for determining isomorphism of
  planar triply connected graphs.
\newblock Circuit Theory, IEEE Transactions on \textbf{13}(2) (1966)  142--148

\bibitem{garey1979computers}
Garey, M., Johnson, D.:
\newblock {Computers and intractability: a guide to NP-completeness}.
\newblock WH Freeman and Company, San Francisco (1979)

\bibitem{park1993finding}
Park, J.K., Phillips, C.A.:
\newblock Finding minimum-quotient cuts in planar graphs.
\newblock In: ACM symposium on Theory of computing. (1993)  766--775

\bibitem{diestel2005graph}
Diestel, R.:
\newblock Graph theory. 2005.
\newblock Grad. Texts in Math (2005)

\end{thebibliography}

\end{document}